\documentclass[12pt,reqno]{amsart}
\usepackage{amsmath,amsfonts,amsthm,amssymb}
\newcommand\version{Oct. 30, 2022}
\usepackage{amsxtra}

\setlength{\voffset}{-.7truein}
\setlength{\textheight}{8.8truein}
\setlength{\textwidth}{6.05truein}
\setlength{\hoffset}{-.7truein}




\newtheorem{theorem}{Theorem}[section]

\newtheorem{lemma}[theorem]{Lemma}

\theoremstyle{definition}

\theoremstyle{remark}
\newtheorem{remark}[theorem]{Remark}


\numberwithin{equation}{section}


\renewcommand{\epsilon}{\varepsilon}

\newcommand{\F}{\mathcal{F}}

\newcommand{\N}{\mathbb{N}}

\renewcommand{\phi}{\varphi}
\newcommand{\R}{\mathbb{R}}

\renewcommand\rho\varrho

\newcommand{\vertiii}[1]{{\left\vert\kern-0.25ex\left\vert\kern-0.25ex\left\vert #1 
    \right\vert\kern-0.25ex\right\vert\kern-0.25ex\right\vert}}

\DeclareMathOperator{\infspec}{inf\, spec}

\newcommand\dd{\mathrm{d}}


\begin{document}

\title[Absence of excited eigenvalues for polaron models --- \version]{Absence of excited eigenvalues for Fr\"ohlich type polaron models at weak coupling}

\author{Robert Seiringer}
\address{Robert Seiringer, IST Austria, Am Campus 1, 3400 Klosterneuburg, Austria}
\email{rseiring@ist.ac.at}

\thanks{\copyright\, 2022 by the author. This paper may be  
reproduced, in
its entirety, for non-commercial purposes.}

\begin{abstract} We consider a class of polaron models, including the Fr\"ohlich model, at zero total momentum, and show that at sufficiently weak coupling  there are no excited eigenvalues below the essential spectrum.
\end{abstract}

\date{\version}

\maketitle


\section{Main Result}

We consider general polaron models of the form
\begin{equation}\label{def:ham}
H= P^2 + \Phi(v) + \N 
\end{equation}
acting on the (bosonic) Fock space $\F(L^2(\R^d))$ for $d\geq 1$,  with  $P = \int_{\R^d} k\, a^\dagger_k a_k \dd k$ the field momentum, $\N$ the number operator  and $\Phi(v)= a(v) + a^\dagger(v)$. The Hamiltonian $H$ in \eqref{def:ham} arises as the restriction of the usual polaron models (describing an electron coupled to a phonon quantum field) to  total momentum zero \cite{Spoh,Moll}. The most studied model of this kind is presumably the Fr\"ohlich model \cite{Froh}, corresponding to $d=3$ and $v_k = g|k|^{-1}$ for some $g\in \R$. We adopt the standard notation that $a^\dagger(v) = \int_{\R^d} a^\dagger_k v_k \dd k$, with the canonical commutation relations $[a_k, a^\dagger_l]= \delta(k-l)$.

We shall assume that $v$ is even, i.e., $v_k = v_{-k}$, and that  $(1+|\,\cdot\,|)^{-1} v \in L^2(\R^d)$, which in particular ensures that $\Phi(v)$ is infinitesimally form-bounded with respect to $P^2 + \N$, hence $H$ is well-defined through its quadratic form and is bounded from below \cite{Lieb,LiebT,Grie,Seir}. We shall actually make the slightly stronger assumption that $\vertiii{v}:= \sup_{k\in \R^d} \| (1+|\,\cdot\, - k|)^{-1} v\| < \infty$  (with $\| \,\cdot\, \|$ denoting the $L^2(\R^d)$ norm).

In the following, we shall write 
 $v = g w$ with $g\geq 0$ and $w$ fixed, and study the spectrum of $H$ for small $g$. 
 Let $E_0$ denote the ground state energy of $H$. It is well-known and that  the essential spectrum of $H$ equals $[E_0 +1, \infty)$ \cite{Moll,Lamp}. 
Our main result is as follows:

\begin{theorem}\label{thm:main}
There exists a $g_0 > 0$ such that for $0\leq g < g_0$, $H$ has  only one eigenvalue below its essential spectrum. In particular, the spectrum of $H$ equals $\sigma(H) = \{E_0\} \cup [E_0 + 1,\infty)$. 
\end{theorem}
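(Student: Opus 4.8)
The plan is a Feshbach--Schur (Schur-complement) reduction onto the two lowest sectors. Write $\mathcal{H}_n$ for the $n$-phonon sector of $\F(L^2(\R^d))$, let $P$ be the orthogonal projection onto $\mathcal{H}_0\oplus\mathcal{H}_1=\C\,\Omega\oplus L^2(\R^d)$ ($\Omega$ the vacuum), and $Q=1-P$ the projection onto $\mathcal{H}_{\ge 2}:=\bigoplus_{n\ge 2}\mathcal{H}_n$. The first step is the \emph{two-phonon gap}: since $\Phi(v)=g\,\Phi(w)$ is infinitesimally form bounded relative to $P^2+\N$ and $\N\ge 2$ on $\ran Q$, one gets $QHQ\ge 2-O(g)$ on $\ran Q$, while $E_0\le\langle\Omega,H\Omega\rangle=0$; hence $\inf\sigma(QHQ)>E_0+1$ once $g$ is small. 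Consequently $QHQ-E$ is boundedly invertible on $\ran Q$ for every $E<E_0+1$, so the Feshbach--Schur principle applies: for such $E$ one has $E\in\sigma(H)$ if and only if $0\in\sigma(F(E))$, where $F(E)=P(H-E)P-PHQ(QHQ-E)^{-1}QHP$ on $\mathcal{H}_0\oplus\mathcal{H}_1$, and moreover $\dim\ker(H-E)=\dim\ker F(E)$.

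Next I would compute $F(E)$ explicitly. Since $\Phi(v)$ changes the phonon number by $\pm1$, the operator $QHP$ annihilates $\Omega$ and sends $\phi_1\in\mathcal{H}_1$ to $a^{\dagger}(v)\phi_1\in\mathcal{H}_2$, while on $\mathcal{H}_0\oplus\mathcal{H}_1$ the block $PHP$ acts (as a form) with vacuum--vacuum entry $0$, vacuum--one-phonon coupling $\phi_1\mapsto\langle v,\phi_1\rangle$, and one-phonon block equal to multiplication by $|k|^2+1$. Hence $F(E)$ is the $2$-component operator with vacuum--vacuum entry $-E$, the same coupling $\langle v,\cdot\rangle$, and one-phonon block $D(E):=|k|^2+1-E-T(E)$, where $T(E):=a(v)(QHQ-E)^{-1}a^{\dagger}(v)\ge 0$ on $\mathcal{H}_1$. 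This is the step where $\vertiii{v}<\infty$ is genuinely used: the restriction of $QHQ$ to $\mathcal{H}_2$ controls only $(k_1+k_2)^2+\N$, not $|k_1|^2+|k_2|^2$, so to see that $T(E)=g^2\,a(w)(QHQ-E)^{-1}a^{\dagger}(w)$ is a bounded nonnegative operator with $\|T(E)\|\le Cg^2\vertiii{w}^2$ uniformly for $E<E_0+1$ (and thus $D(E)>0$ there) one shows that $(QHQ-E)^{-1/2}a^{\dagger}(v)$ is bounded from $\mathcal{H}_1$ to $\ran Q$, which reduces to $\sup_p\int v_k^2(1+|k-p|)^{-2}\,\dd k=\vertiii{v}^2<\infty$; note $a^{\dagger}(v)\phi_1\notin\F$ when $v\notin L^2$, yet the composition with the resolvent is well defined. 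One also needs $(|k|^2+1-E)^{-1}v\in L^2(\R^d)$, which follows already from $(1+|\,\cdot\,|)^{-1}v\in L^2$.

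The third step reduces $0\in\sigma(F(E))$ to a scalar equation and settles uniqueness. Solving $F(E)(c,\phi_1)=0$ in the one-phonon component gives $\phi_1=-c\,D(E)^{-1}v$, and substituting into the vacuum component yields $0\in\sigma(F(E))\iff\Xi(E):=E+\langle v,D(E)^{-1}v\rangle=0$ (here $0\notin\sigma_{\mathrm{ess}}(F(E))=\sigma_{\mathrm{ess}}(D(E))\subset(0,\infty)$ for $E<E_0+1$, so $0\in\sigma(F(E))$ means $0$ is an eigenvalue), and the kernel is then one-dimensional, spanned by $(1,-D(E)^{-1}v)$. Finally $\Xi$ is strictly increasing on $(-\infty,E_0+1)$: from $T(E)\ge0$ and $\tfrac{\dd}{\dd E}T(E)=a(v)(QHQ-E)^{-2}a^{\dagger}(v)\ge0$ one gets $\tfrac{\dd}{\dd E}D(E)^{-1}=D(E)^{-1}\bigl(1+\tfrac{\dd}{\dd E}T(E)\bigr)D(E)^{-1}\ge D(E)^{-2}$, hence $\Xi'(E)\ge 1+\|D(E)^{-1}v\|^2\ge 1$. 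Therefore $\Xi(E)=0$ has at most one root below $E_0+1$; since $E_0=\inf\sigma(H)<E_0+1=\inf\sigma_{\mathrm{ess}}(H)$, the point $E_0$ is an isolated eigenvalue and hence is a root, so it is the only one, with $\dim\ker(H-E_0)=\dim\ker F(E_0)=1$. Together with $\sigma_{\mathrm{ess}}(H)=[E_0+1,\infty)$ this gives $\sigma(H)=\{E_0\}\cup[E_0+1,\infty)$.

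The main obstacle sits in the first two steps: the whole argument rests on $\inf\sigma(QHQ)$ staying strictly above $E_0+1$ --- exactly what is expected to fail for large $g$, where a bound state can split off in the sectors with two or more phonons --- and on making $F(E)$ well defined despite $v\notin L^2$, which is precisely where the norm $\vertiii{v}$ must be invoked; granted these uniform-in-$g$ resolvent estimates, the monotonicity in Step~3 is soft. A variant avoiding the Feshbach formalism leads to the same equation: from $H\psi=E\psi$ and the pull-through identity $(H_k+1-E)\,a_k\psi=-v_k\psi$ with $H_k:=(P+k)^2+\N+\Phi(v)$, together with the bound $\inf\sigma(H_k)\ge E_0$, one closes a self-consistent relation for $\langle\Omega,\psi\rangle$ of the form $\Xi(E)=0$ and argues monotonicity as above.
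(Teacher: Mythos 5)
The architecture you propose --- Feshbach--Schur reduction onto the low sectors followed by monotonicity of a scalar function $\Xi$ --- matches the skeleton of the actual proof, which likewise uses Schur complements and a monotonicity argument to dispose of putative eigenvalues $E=E_0+1-\epsilon$ with $\epsilon$ not too small. The gap sits in your Step~2, at the parenthetical ``and thus $D(E)>0$ there.'' The bound $\|T(E)\|\le Cg^2\vertiii{w}^2$ does \emph{not} imply positivity of $D(E)=k^2+1-E-T(E)$ up to $E=E_0+1$: at $k=0$ and $E=E_0+1$ the free part equals $-E_0$, and $-E_0=\langle v|Y_0|v\rangle$ is itself of order $g^2\vertiii{w}^2$, i.e.\ of exactly the same order as the correction you are subtracting. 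No norm bound on $T(E)$ can decide the sign in this borderline regime. In fact, in the infrared singular case $|\,\cdot\,|^{-1}v\notin L^2$ the operator $D(E_0+1)$ is genuinely \emph{not} nonnegative: writing $T(E_0+1)$ as a multiplication part $-E_k$ plus a kernel $v_kv_l\lambda_{k,l}$ with $\lambda_{0,0}=1+\langle v|Y_0(1+X)Y_0|v\rangle\ge 1$, and testing with $\phi_k=v_k(k^2+\delta)^{-1}$, the diagonal contribution is $O(I_\delta)$ with $I_\delta=\int v_k^2(k^2+\delta)^{-1}\dd k\to\infty$, while the kernel contributes $-\lambda_{0,0}\,I_\delta^2(1+o(1))$, so the quadratic form becomes negative. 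The negative part of $D(E)$ is compensated only by the rank-one vacuum coupling $(1+E_0)^{-1}|v\rangle\langle v|$ --- which your reduction has moved into the scalar equation instead of into $D(E)$ --- and the compensation is \emph{exact}: the relevant Birman--Schwinger eigenvalue is identically zero in $g$ (Lemma~\ref{lem:main}, $\langle\varphi|(1+A)^{-1}|\varphi\rangle=1$), so the positivity cannot be obtained from any finite-order or norm estimate.

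Concretely, $\infspec D(E)$ is decreasing in $E$, positive at $E=E_0$, and negative at $E=E_0+1$, so it crosses zero at some $E^{**}\in(E_0,E_0+1)$; at and beyond that point $D(E)^{-1}$ no longer exists as a positive operator, $\Xi$ acquires a pole, and the inequality $\Xi'\ge 1$ fails. Your monotonicity argument therefore only excludes eigenvalues in $(E_0,E^{**})$, leaving open precisely the window adjacent to the bottom of the essential spectrum where excited states would appear --- this is the small-$\epsilon$ regime to which the entire analysis of Section~\ref{sec:xi0} (the decomposition \eqref{Os}, the operator $\mathcal{S}=1+A-|\varphi\rangle\langle\varphi|$, and the pull-through identities behind Lemmas~\ref{lem:main} and~\ref{lem:aux}) is devoted. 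The obstacles you flag (the two-phonon gap $\nu_2>0$ and the well-definedness of $T(E)$ via $\vertiii{v}$) are real but routine; the one you label as soft is the hard one. The pull-through variant you sketch at the end inherits the same difficulty, since closing the self-consistent equation for $\langle\Omega,\psi\rangle$ again requires inverting the same borderline operator.
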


The proof given below shows that the smallness condition can be quantified in terms of $\vertiii{v}$.  In other words, $g_0 \geq C \vertiii{v}^{-1}$ for some universal constant $C>0$ (see Remark~\ref{rem2} at the end of the next Section).

In the infrared regular case when $|\, \cdot \,|^{-1} v \in L^2(\R^d)$, the corresponding result in Theorem~\ref{thm:main} is much easier to obtain via perturbation theory and is actually known (see \cite{Minl,Dahl}). The smallness condition depends on $\| |\, \cdot \,|^{-1} v\|$, however, and hence the general result cannot be obtained via a limiting argument. Our main contribution thus concerns the infrared singular case when $|\, \cdot \,|^{-1} v \not\in L^2(\R^d)$, which is in particular the case for the usual Fr\"ohlich model. The proof shows that in this case the result is non-perturbative in a certain sense, to be made precise in Remark~\ref{rem} below. In fact, the relevant Birman--Schwinger eigenvalue (whose negativity would imply the existence of an excited eigenvalue) turns out to be identically zero, hence obtaining it only to a finite order in $g$ would not allow to draw a conclusion.

In the case of the Fr\"ohlich model, it was recently shown in \cite{Mitr} that excited eigenvalues do appear below the essential spectrum for larger values of $g$. In fact, their number goes to infinity as $g\to \infty$. Our result thus complements that work by proving that a minimal threshold on the coupling constant  is needed for the existence of excited eigenstates.

We expect that a result as in Theorem~\ref{thm:main} holds also for non-zero total momentum, where $P$ in \eqref{def:ham} has to be replaced by $P-\xi$ for $\xi \in \R^d$, but our proof does not extend to the case $\xi\neq 0$ in an obvious way. 
As shown in \cite{Dahl}, one can at least prove  that excited eigenvalues can only exist in a small window $|\xi|\leq O(g)$.

\section{Proof of Theorem~\ref{thm:main}}\label{sec:xi0}

In this section we shall prove absence of excited eigenvalues of $H$ for $g$ small enough. 
Before starting the proof, let us introduce some  notation.
We shall denote by  $\Pi^{\geq n}$  the projection onto the subspace of $\mathcal{F}$ where $\N \geq n$, and  also by $\Pi^n$ the projection onto the $n$-particle space where $\N = n$. 

For convenience, we shall assume that $v$ is real-valued, which is not a restriction and can always be achieved by a  unitary transformation, replacing $a_k$ by $e^{i\theta_k} a_k$ for suitable $\theta_k \in [0,2\pi)$. 

Throughout the proof we shall assume that $g$ is suitably small. In particular, we shall assume that $E_0 > -1$, and also that $\nu_2 >0$, where we denote 
 $\nu_n := \infspec \Pi^{\geq n} ( H - 1- E_0 ) \Pi^{\geq n}$ (viewed as an operator on $\Pi^{\geq n} \F$). Since $\nu_2$ is equal to $1$ at $g=0$, this is the case for small $g$ by continuity.

The main strategy of the proof is strongly inspired by the work in \cite{Dahl}. 
Assume that $H$ has an eigenvalue $E_0 + 1 - \epsilon$ for $0<\epsilon < 1$. By continuity, $\epsilon$ is small if $g$ is small. In fact, we claim that $\epsilon \leq -\nu_1$, which is a quantity that vanishes as $g\to 0$. Otherwise, if $\nu_1 + \epsilon > 0$ we can apply the Schur complement formula to the vacuum sector $\Pi^0\F$ to obtain the identity 
\begin{equation}\label{schur}
\epsilon - E_0 - 1  = \langle v|  \left[ \Pi^{\geq 1} ( H - 1- E_0 + \epsilon) \Pi^{\geq 1}\right]^{-1} | v \rangle\,.
\end{equation}
Since the right side is decreasing in $\epsilon$ for $\epsilon > - \nu_1$, there can be only one solution to this equation, given by $\epsilon = 1$ and corresponding to the ground state.

We can thus assume that $\epsilon$ is small. 
By the Schur complement formula, applied to the one-particle sector $\Pi^1 \F$, the existence of an eigenvalue $E_0 + 1 - \epsilon$ of $H$ is equivalent to the operator $\mathcal{O}^{(\epsilon)}$ having an eigenvalue $0$,  where
$$
\mathcal{O}^{(\epsilon)} = \epsilon + k^2 - E_0  -  D^{(\epsilon)} + \frac 1{1+E_0-\epsilon} | v \rangle \langle v|
$$
acting on the one-particle space $L^2(\R^d) = \Pi^1 \mathcal{F}$. Here we denote 
$$
D^{(\epsilon)} = \Pi^1 a(v) X^{(\epsilon)} a^\dagger(v) \Pi^1 \quad \text{with} \quad X^{(\epsilon)} = \left[ \Pi^{\geq 2} \left( H- E_0 - 1 + \epsilon\right) \Pi^{\geq 2} \right]^{-1}\,.
$$
Note that $X^{(\epsilon)}$ is well-defined, positive and bounded for $\epsilon\geq 0$, by our assumption that $\nu_2 > 0$. Since $\mathcal{O}^{(\epsilon)} \geq \epsilon + \mathcal{O}^{(0)}$, Theorem~\ref{thm:main} follows if  $\mathcal{O}^{(0)} \geq 0$, which we shall show in the following. For simplicity of notation, we shall drop superscripts $(0)$ from now on.

We start by taking a closer look at the structure of 
 $D = D^{(0)}$. The canonical commutation relations imply that  
\begin{align*}
X^{-1} a^\dagger_k & = \Pi^{\geq 2} \left( H - E_0 - 1\right)  a^\dagger_k = a^\dagger_k \Pi^{\geq 1} 
 \left( (P+k)^2 + \Phi(v) + \N   - E_0 \right)  + \Pi^{\geq 2} v_k \\ & = a^\dagger_k Y_k^{-1} + \Pi^{\geq 2} v_k
 \end{align*}
 where, for general $k \in \R^d$, we denote
$$
Y_k = \left [ \Pi^{\geq 1} \left( (P+k)^2 + \Phi(v) + \N  - E_0 \right) \Pi^{\geq 1} \right]^{-1}\,.
$$
Since $\infspec ( (P+k)^2 + \Phi(v) + \N ) \geq E_0$ for all $k\in \R^d$ \cite{Gros,Lamp,Polz}, and the ground state of $H$ is not orthogonal to the Fock space vacuum, $Y_k$ is well-defined. 
This leads to the pull-through formula 
\begin{equation}\label{id1}
X a^\dagger_k =  a^\dagger_k Y_k - v_k X Y_k \,.
\end{equation}
Similarly, 
\begin{align*}
a_l Y_k^{-1}  & = a_l   \left( (P+k)^2 + \Phi(v) + \N  - E_0 \right) \Pi^{\geq 1} 
\\ &=    \left( (P+k+l)^2 +\Phi(v) + \N +  1  - E_0 \right) a_l  + v_l \Pi^{\geq 1} 
\end{align*}
and hence
\begin{equation}\label{id2}
a_l Y_k = Z_{k+l} a_l - v_l Z_{k+l} Y_k
\end{equation}
where we denote 
$$
Z_k = \left[ (P+k)^2 + \Phi(v) + \N  +  1  - E_0\right]^{-1}\,.
$$
With $\Omega\in \F$ denoting the vacuum vector,  the kernel of $D$ can be expressed as
$$
D(k,l)  = \langle \Omega| a_k a(v) X a^\dagger(v)  a^\dagger_l | \Omega \rangle = \langle v|  a_k  X a^\dagger_l | v   \rangle \,.
$$
With the identities \eqref{id1} and~\eqref{id2} above,
we have 
\begin{align*}
a_k  X a^\dagger_l   & =  a_k  (a^\dagger_l Y_l - v_l X Y_l )   \\ & = \delta(k-l)  Y_l    +  a^\dagger_l a_k  Y_l  - v_l  (Y_k a_k - v_k Y_k     X) Y_l   \\ & =  \delta(k-l)  Y_l    +  a^\dagger_l ( Z_{k+l} a_k - v_k Z_{k+l} Y_l )  - v_l  (Y_k a_k - v_k Y_k     X) Y_l   \\ & =   \delta(k-l)  Y_l    +  a^\dagger_l  Z_{k+l} a_k - v_k a^\dagger_l Z_{k+l} Y_l  - v_l  Y_k Z_{k+l} a_k \\ & \quad + v_l v_k Y_k Z_{k+l} Y_l  + v_l  v_k Y_k     X Y_l   
\end{align*}
and hence
\begin{align*}
D(k,l)  & = \delta(k-l) \langle v | Y_l | v\rangle \\ & \quad + v_k v_l  \langle \Omega|  (1 - a(v) Y_k) Z_{k+l}  (1  - Y_l a^\dagger(v)) | \Omega  \rangle +v_k v_l \langle v | Y_k X Y_l | v \rangle\,.
\end{align*}

In the following we shall  denote 
$$
E_k = - \langle v | Y_k  | v \rangle
$$
for general $k\in \R^d$. Note that  indeed $E_0 = - \langle v | Y_0  | v \rangle$ equals the ground state energy of $H$, again by the Schur complement formula (Eq.~\eqref{schur} for $\epsilon=1$). 
We thus have
$$
\mathcal{O} = k^2 + E_k - E_0 + R
$$
where $k^2$ and $E_k$ are understood as multiplication operators, and  $R$ is the operator with integral kernel $R(k,l) = v_k v_l  C(k,l)$ with
\begin{equation}\label{def:ckl}
C(k,l) =  \frac 1{1+E_0}  - \langle \Omega|   (1 - a(v) Y_k) Z_{k+l}  (1  - Y_l a^\dagger(v)) |\Omega  \rangle - \langle v | Y_k X Y_l | v \rangle\,.
\end{equation}

We shall further need that $|E_k - E_0 | \leq C g^2 |k|^2$ for small $g$ and a suitable constant $C>0$. This can easily be proved using the resolvent identity in the form $Y_0 - Y_k = Y_0 (k^2 + 2 k\cdot P) Y_k$; the details are carried out in the appendix.

Of particular relevance will be the constant $c_0 = C(0,0)$, which turns out to be positive, at least for small $g$. In fact,
$$
c_0 =   \langle v|G^2 P^2 |v\rangle + O(g^4) 
$$
where we introduced  the  notation $G = (P^2 + 1 - E_0)^{-1}$. 
 Let us write
 $$
C(k,l) = c_0 + \psi_{k} + \psi_{l} + F(k,l)
$$
with  $\psi_{k} = C(k,0) - C(0,0)$. This leads to the decomposition
\begin{align}\nonumber
\mathcal{O} & = k^2 + E_k - E_0 + c_0 |v\rangle\langle v| + |v\rangle\langle v\psi | + | v \psi\rangle\langle v| + v F v \\ & = k^2 + E_k - E_0 + c_0 |v + c_0^{-1} v\psi \rangle\langle v + c_0^{-1} v\psi |  - c_0^{-1}  | v \psi\rangle\langle v \psi| + v F v \label{Os}
\end{align}
where $vFv$ is short for the operator with integral kernel $v_k v_l F(k,l)$. 

We shall now distinguish two cases. If $k\mapsto |k|^{-1} v_k$ is in $L^2(\R^d)$, we can argue in a perturbative way and simply take the identity in the first line in \eqref{Os} and write it as
\begin{align*}
\mathcal{O} & = |k| \Big( 1 + \frac{E_k - E_0}{|k|^2}  + c_0 |k|^{-1} |v\rangle\langle v| |k|^{-1}  + |k|^{-1}  |v\rangle\langle v\psi | |k|^{-1} \\ & \qquad\qquad  +|k|^{-1} | v \psi\rangle\langle v||k|^{-1} +|k|^{-1} v F v |k|^{-1} \Big) |k|\,.
\end{align*}
Since all the terms besides $1$ in the parentheses are bounded and $O(g^2)$, one readily deduces that $\mathcal{O} \geq 0$ for $g$ small enough.

We can thus assume from now on that $k\mapsto |k|^{-1} v_k$ is not in $L^2(\R^d)$. 
As long as $c_0 > 0$ we can drop the first rank-one projection in the second line of \eqref{Os} for a lower bound, and obtain  $\mathcal{O} \geq |k| \mathcal{S} |k| $ with
$$
\mathcal{S}= 1 + \frac{ E_k - E_0}{|k|^2} + |k|^{-1} v F v |k|^{-1} - |\varphi \rangle \langle \varphi| =: 1+ A - |\varphi\rangle \langle \varphi | 
$$
where $\varphi_{k} = c_0^{-1/2} |k|^{-1} v_k \psi_{k}$. Note that $\varphi \in L^2(\R^d)$ even if $k\mapsto |k|^{-1} v_k$ is not, since $\psi_{k}$ vanishes at $k=0$ at least linearly. For the same reason, $A$ is bounded. In fact, $\|\varphi\|=O(1)$ while $\|A\| = O(g^2)$. This can easily be  shown by controlling the derivative of $k\mapsto C(k,l)$; the details are carried out in the appendix. 

For $g$ small enough we can thus further write
$$
\mathcal{S}= \sqrt{ 1 + A} \left( 1- (1+A)^{-1/2} | \varphi\rangle\langle \varphi | (1+A)^{-1/2} \right) \sqrt{1+A}
$$
and positivity of $\mathcal{S}$ is  equivalent to the bound $\| (1+A)^{-1/2} \varphi\| \leq 1$. 
Remarkably, this latter norm is identically equal to $1$, as shown 
in the following Lemma.

\begin{lemma}\label{lem:main} 
For small $g$ we have 
$$
 \langle \varphi | (1+A)^{-1} | \varphi\rangle = 1\,.
 $$
\end{lemma}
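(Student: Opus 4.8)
The plan is to reinterpret the quantity $\langle \varphi | (1+A)^{-1} | \varphi \rangle$ directly in terms of the original Schur complement identity, rather than computing $A$ and $\varphi$ explicitly. The key observation is that the entire construction of $\mathcal{O}$ from the one-particle Schur complement is reversible: the identity $\mathcal{O}^{(\epsilon)} \geq \epsilon + \mathcal{O}^{(0)}$ came from dropping $\epsilon$-terms, but at $\epsilon$ exactly equal to $1$ the operator $\mathcal{O}^{(1)}$ corresponds to the ground state of $H$, which genuinely has eigenvalue $0$. More precisely, $\mathcal{O}^{(1)}$ has a zero eigenvalue with eigenfunction the one-particle component of the (true) ground state of $H$. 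I would exploit this by running the same algebraic manipulations that produced the decomposition \eqref{Os} but keeping track of the $\epsilon$-dependence, or alternatively by evaluating everything at the ground state energy.

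\textbf{Key steps.} First, I would recall that by definition $E_0 = -\langle v | Y_0 | v \rangle$ and that the relation $\mathcal{O}^{(1)}$ having eigenvalue $0$ is equivalent (via the same Schur complement applied to $\Pi^1\F$) to $H$ having ground state energy $E_0$. Second, I would identify the zero mode: writing $\Psi_0$ for the normalized ground state of $H$ and $\Psi_0^{(1)} = \Pi^1 \Psi_0$ for its one-particle component, the Schur complement structure forces $\mathcal{O}^{(1)} \Psi_0^{(1)} = 0$. Third — and this is the crux — I would show that $\Psi_0^{(1)}$ is, up to normalization, exactly the vector $|k| (1+A)^{-1} \varphi$ (or something very close to it), by carefully matching the $\epsilon=1$ version of the decomposition \eqref{Os} against the $\epsilon=0$ version. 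The point is that $\mathcal{O}^{(1)} = \mathcal{O}^{(0)} + 1 + (\text{corrections from } E_k, X, C \text{ at } \epsilon=1 \text{ vs } 0)$, and the leading piece of the correction is precisely the rank-one term that was dropped; setting $\mathcal{O}^{(1)}$ applied to its zero mode equal to zero yields, after conjugating by $|k|^{-1}$ and using $\mathcal{S} = 1 + A - |\varphi\rangle\langle\varphi|$, an equation of the form $(1+A - |\varphi\rangle\langle\varphi|)\eta = 0$ for the corresponding transformed vector $\eta$, and the solvability of this (for a genuinely nonzero $\eta$, which holds because the ground state has nonzero one-particle component for $g>0$) forces $\langle\varphi|(1+A)^{-1}|\varphi\rangle = 1$ by the standard Birman--Schwinger / rank-one perturbation criterion.

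\textbf{Alternative cleaner route.} Rather than tracking $\epsilon$-dependence, it may be cleaner to observe that $C(k,l)$ and $E_k$ were defined through resolvents $X, Y_k, Z_k$ of $H$-type operators, and that the combination appearing in $c_0 + \psi_k + \psi_l + F(k,l)$ together with $E_k - E_0$ is engineered so that the full operator $\mathcal{O}$, when evaluated "as if $\epsilon = 1$" (i.e., adding back $1$ and using $\tfrac{1}{1+E_0-1}|v\rangle\langle v| = \tfrac{1}{E_0}|v\rangle\langle v|$ in place of $\tfrac{1}{1+E_0}|v\rangle\langle v|$), must annihilate $\Psi_0^{(1)}$. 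I would make this precise by deriving the analogue of \eqref{def:ckl} at general $\epsilon$, noting $C^{(\epsilon)}(0,0) \to c_0$ etc., and then specializing. Tracking the $\tfrac{1}{1+E_0}$ versus $\tfrac{1}{E_0}$ discrepancy carefully is what converts "$\mathcal{O}^{(1)}\Psi_0^{(1)}=0$" into exactly the rank-one eigenvalue equation for $\mathcal{S}$ with the dropped projection $|v + c_0^{-1}v\psi\rangle$ reinstated.

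\textbf{Main obstacle.} The hard part is the bookkeeping in step three: verifying that the vector realizing the zero eigenvalue of $\mathcal{S}$ is the $(1+A)^{-1}\varphi$ direction with the \emph{correct} normalization, i.e., that no extra scalar factor creeps in when one passes from $\mathcal{O}^{(1)}$ back through the conjugations by $|k|^{\pm 1}$ and $\sqrt{1+A}$ and through the rank-one algebra. One must confirm that the vector $v + c_0^{-1} v\psi$ — whose projection was dropped to pass from \eqref{Os} to $\mathcal{S}$ — is, after dividing by $|k|$ and by $\sqrt{1+A}$, parallel to $\varphi/\sqrt{c_0}$ up to the precise constant that makes the Birman--Schwinger value come out to exactly $1$ and not merely $O(1)$. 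Since the statement is an \emph{exact} identity (not an estimate), every constant must be chased, and the infrared singularity of $|k|^{-1}v_k$ means one must be careful that all manipulations take place among genuinely $L^2$ vectors (which is why the linear vanishing of $\psi_k$ at $k=0$ was emphasized). Once the zero mode is correctly identified, the conclusion is immediate from the elementary fact that $1+A - |\varphi\rangle\langle\varphi|$ is singular iff $\langle\varphi|(1+A)^{-1}|\varphi\rangle = 1$.
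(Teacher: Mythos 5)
Your structural intuition is sound in two respects: the zero mode of $\mathcal{S}$ really is (a multiple of) $|P|$ applied to the one-particle component of the ground state --- this is exactly what \eqref{firstid} says, since $(1-Y_0 a^\dagger(v))|\Omega\rangle$ is the ground state --- and your closing step (a nonzero solution of $(1+A-|\varphi\rangle\langle\varphi|)\eta=0$ forces $\langle\varphi|(1+A)^{-1}|\varphi\rangle=1$, because $\langle\varphi|\eta\rangle=0$ would contradict $1+A>0$) is correct. The gap sits precisely at what you yourself call the crux. From $\mathcal{O}^{(1)}\Pi^1\Psi_0=0$ you need $\bigl(\mathcal{O}^{(0)}-\mathcal{O}^{(1)}-c_0|v+c_0^{-1}v\psi\rangle\langle v+c_0^{-1}v\psi|\bigr)\Pi^1\Psi_0=0$ \emph{exactly}. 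But $\mathcal{O}^{(0)}-\mathcal{O}^{(1)}$ is not ``$-1$ plus a correction whose leading piece is the dropped rank-one term'': by the resolvent identity it equals $-1+a(v)X^{(1)}X^{(0)}a^\dagger(v)-\tfrac{1}{E_0(1+E_0)}|v\rangle\langle v|$, whose non-rank-one part is $O(1)$ in norm and whose rank-one part has the opposite sign and a completely different magnitude from the dropped projection (which is $O(g^4)$). The required vanishing on the specific vector $\Pi^1\Psi_0=-\Pi^1 Y_0|v\rangle$ is therefore a delicate exact cancellation that your proposal asserts but never establishes, and verifying it amounts to redoing the pull-through algebra of \eqref{id1}--\eqref{id4} and Lemma~\ref{lem:aux} --- i.e., the entire content of the proof. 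A further technical flaw: in your ``cleaner route'' you cannot reinstate the dropped projection after conjugating by $|k|^{-1}$ in the infrared-singular case, since $|k|^{-1}v_k(1+c_0^{-1}\psi_k)\notin L^2(\R^d)$; only the unconjugated identity $\mathcal{O}=|k|\mathcal{S}|k|+c_0|v+c_0^{-1}v\psi\rangle\langle v+c_0^{-1}v\psi|$ makes sense.

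What is missing is the mechanism the paper actually uses to pin down the constant: after the pull-through computation one arrives at \eqref{impl}, which expresses the $L^2$ function $\sqrt{c_0}\,\varphi_k$ as a scalar multiple of the non-$L^2$ function $v_k/|k|$ plus a manifestly $L^2$ remainder, so the scalar must vanish. That single observation simultaneously yields the zero-mode identification \eqref{firstid} \emph{and} the exact normalization \eqref{secondid} --- the constant-chasing you flag as the main obstacle. Without this step (or the equivalent explicit computation in Remark~\ref{rem2}, which also covers the infrared-regular case), your argument produces a candidate zero mode only up to unverified constants, which is not enough to prove an exact identity.
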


This readily implies that $\infspec \mathcal{S}= 0$, and thus with the above proves Theorem~\ref{thm:main}.

\begin{remark}\label{rem}
It was already observed in \cite{Dahl} that 
$$
\lim_{g\to 0} \| (1+A)^{-1/2} \varphi\|  = \lim_{g\to 0} \| \varphi\| = 1
$$
and thus that a perturbative investigation would require to go to higher order in $g$. Lemma~\ref{lem:main} shows that such a perturbative strategy is bound to fail, however, as all higher order terms in $g$ vanish. In this sense, the result is non-perturbative. 

We also note  that the positive first  rank-one projection in the second line of \eqref{Os}, which was dropped to obtain the lower bound $\mathcal{O} \geq |k| \mathcal{S} |k| $, cannot be used to obtain a stronger lower bound in the infrared singular case $|\,\cdot \,|^{-1} v \not\in L^2(\R^d)$. In fact, since the  vector in question, when divided by $|k|$, is not in $L^2(\R^d)$, one can easily check that 
$$
\lim_{\epsilon\to 0} \infspec (k^2 + \epsilon)^{-1/2} \mathcal{O} (k^2 + \epsilon)^{-1/2} = \infspec \mathcal{S}
$$
so $\mathcal{S}$ is indeed the relevant Birman--Schwinger operator.
\end{remark}

\begin{proof}[Proof of Lemma~\ref{lem:main}]

We shall show that 
\begin{equation}\label{firstid}
\sqrt{c_0} \varphi = \frac{ 1+A}{1+E_0} \Pi^1 |P|  Y_0 |v\rangle
\end{equation}
as well as 
\begin{equation}\label{secondid}
\sqrt{c_0} \langle \varphi |  |P|Y_0 |v\rangle = (1+E_0) c_0
\end{equation}
which together obviously imply the statement. 

By definition we have
\begin{equation}\label{defp}
\sqrt{c_0} \varphi_{k} = \frac {v_k}{|k|} \left(  \lambda_0 - \lambda_k\right)
\end{equation}
where
$$
\lambda_k =  \langle \Omega|   (1 - a(v) Y_k) Z_{k}  (1  - Y_0 a^\dagger(v)) |\Omega  \rangle + \langle v | Y_k X Y_0 | v \rangle\,.
$$
 The key observation is contained in the following Lemma.

\begin{lemma}\label{lem:aux}
$$
v_k \lambda_k = \langle \Omega| a_k ( 1 + a(v) + D) Y_0  | v\rangle  \,.
$$
\end{lemma}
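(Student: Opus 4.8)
The plan is to evaluate the right‑hand side directly, by expanding
\[
\langle\Omega| a_k(1+a(v)+D)Y_0|v\rangle = \langle\Omega|a_kY_0|v\rangle + \langle\Omega|a_k a(v)Y_0|v\rangle + \langle\Omega|a_k DY_0|v\rangle ,
\]
and handling the three terms with only two inputs. The first is the elementary pull‑through $a_k(P^2+\Phi(v)+\N-E_0) = Z_k^{-1}a_k + v_k$, obtained exactly as \eqref{id1}, \eqref{id2} by commuting $a_k$ past $P^2$, $\Phi(v)$ and $\N$, together with the vacuum‑sector identity \eqref{schur} at $\epsilon=1$ written as $(P^2+\Phi(v)+\N-E_0)Y_0|v\rangle = |v\rangle - E_0\Omega$. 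The second is the integral kernel of $D$ computed above, $D(k,l) = -\delta(k-l)E_l + v_kv_l\big(\tfrac1{1+E_0}-C(k,l)\big)$, which at $l=0$ also gives $\lambda_k = \tfrac1{1+E_0}-C(k,0)$.

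From the first input, $a_kY_0|v\rangle = v_k Z_k(1-Y_0a^\dagger(v))|\Omega\rangle$; writing $b_k := \langle\Omega| Z_k(1-Y_0a^\dagger(v))|\Omega\rangle$ one gets $\langle\Omega|a_kY_0|v\rangle = \big(\Pi^1Y_0|v\rangle\big)(k) = v_k b_k$, and, since $[a_k,a(v)]=0$, $\langle\Omega|a_k a(v)Y_0|v\rangle = v_k\langle\Omega|a(v)Z_k(1-Y_0a^\dagger(v))|\Omega\rangle$. In the latter one substitutes $a(v) = Z_k^{-1} - (P+k)^2 - a^\dagger(v) - \N - 1 + E_0$; using $Z_k^{-1}Z_k=1$, $\langle\Omega|(P+k)^2 = k^2\langle\Omega|$ (as $P\Omega=0$) and $\langle\Omega|\N = \langle\Omega|a^\dagger(v) = 0$, it collapses to $\langle\Omega|a_k a(v)Y_0|v\rangle = v_k\big(1+(E_0-k^2-1)b_k\big)$. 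For the third term, $DY_0|v\rangle = D\,\Pi^1Y_0|v\rangle$, so $\langle\Omega|a_k DY_0|v\rangle = \int D(k,l)\,v_l b_l\,dl$; inserting the kernel of $D$ and using $\int v_l\,\big(\Pi^1Y_0|v\rangle\big)(l)\,dl = \langle v|Y_0|v\rangle = -E_0$ gives $\langle\Omega|a_k DY_0|v\rangle = -E_k v_k b_k - \tfrac{E_0}{1+E_0}v_k - v_k\int v_l^2 b_l\,C(k,l)\,dl$. Adding the three contributions, the right‑hand side of the Lemma equals $v_k\big(\tfrac1{1+E_0} + b_k(E_0-k^2-E_k) - \int v_l^2 b_l C(k,l)\,dl\big)$, so the assertion is equivalent to the scalar identity
\[
C(k,0) = b_k\,(k^2+E_k-E_0) + \int v_l^2\,b_l\,C(k,l)\,dl .
\]

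It remains to prove this identity. Since $v_l^2 b_l = v_l\,\big(\Pi^1Y_0|v\rangle\big)(l)$, the integral equals $\tfrac1{v_k}(R\chi)(k)$ with $\chi := \Pi^1 Y_0|v\rangle$ and $R$ the operator with kernel $v_kv_l C(k,l)$, so the identity reads $(\mathcal{O}\chi)(k)=v_k C(k,0)$, i.e.\ $\big(k^2+E_k-E_0+R\big)\chi = v\,C(\,\cdot\,,0)$. The point is that $\chi$ is a multiple of the one‑particle component of the ground state $\Psi_0$, because $\Pi^{\geq1}\Psi_0 = -\langle\Omega|\Psi_0\rangle\,Y_0|v\rangle$, and hence $\mathcal{O}^{(1)}\chi=0$ by the Schur complement correspondence at $\epsilon=1$. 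The identity then follows by comparing $\mathcal{O}=\mathcal{O}^{(0)}$ with $\mathcal{O}^{(1)}$, using the resolvent identity $X^{(0)}-X^{(1)} = X^{(1)}X^{(0)}$ (which comes from $(X^{(1)})^{-1}-(X^{(0)})^{-1} = \Pi^{\geq2}$) and the relation $\Pi^{\geq2}Y_0|v\rangle = -X^{(1)}a^\dagger(v)\chi$. Alternatively, one can prove it head‑on by feeding $\big(\Pi^1Y_0|v\rangle\big)(l) = v_l b_l$ and the resolvent identity $Y_l-Y_0 = -Y_l(l^2+2l\cdot P)Y_0$ back into the $l$‑integral and collapsing it with \eqref{id1}, \eqref{id2} and the pull‑through above.

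I expect this last step — the scalar identity — to be the main obstacle: it is the only point at which a genuine cancellation, rather than bookkeeping, occurs, and it must be carried out without ever integrating $v$ against a constant, since in the infrared singular regime $\|v\|=\infty$ (as for the Fröhlich model) every momentum integral has to be produced already ``dressed'' by a resolvent of shifted argument, which supplies the decay making it convergent; the pull‑through relation $a_k(P^2+\Phi(v)+\N-E_0)=Z_k^{-1}a_k+v_k$ is precisely what attaches each $v$ to such a resolvent before an integration is performed. Keeping track of which subspace $\Pi^{\geq n}$ each resolvent acts on, and of the vacuum‑sector contributions of the $Z_k$'s, is where the care lies.
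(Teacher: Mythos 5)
Your reduction is correct as far as it goes: the pull-through $a_k(P^2+\Phi(v)+\N-E_0)=Z_k^{-1}a_k+v_k$ combined with $(P^2+\Phi(v)+\N-E_0)Y_0|v\rangle=|v\rangle-E_0\Omega$ does give $a_kY_0|v\rangle=v_kZ_k(1-Y_0a^\dagger(v))|\Omega\rangle$, your evaluations of the three terms on the right-hand side all check out, and the Lemma is indeed equivalent to the scalar identity $\mathcal{O}^{(0)}\chi=v\,C(\cdot,0)$ with $\chi=\Pi^1Y_0|v\rangle$. The gap is in the last step. Carrying out the comparison you propose, i.e.\ $\mathcal{O}^{(0)}\chi=(\mathcal{O}^{(0)}-\mathcal{O}^{(1)})\chi$ with $X^{(0)}-X^{(1)}=X^{(0)}X^{(1)}$ and $X^{(1)}a^\dagger(v)\chi=-\Pi^{\geq2}Y_0|v\rangle$, yields
\begin{equation*}
\mathcal{O}^{(0)}\chi=-\chi+\Pi^1a(v)XY_0|v\rangle+\tfrac1{1+E_0}\,v\,,
\end{equation*}
so what remains to prove is $v_k\lambda_k=\langle\Omega|a_k(1-a(v)X)Y_0|v\rangle$. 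But by \eqref{id4} (together with $X|v\rangle=0$) one has $\langle\Omega|a_ka(v)XY_0|v\rangle=-\langle\Omega|a_ka(v)Y_0|v\rangle-\langle\Omega|a_kDY_0|v\rangle$, so this remaining identity is literally the statement of the Lemma again: the input $\mathcal{O}^{(1)}\chi=0$ is just the definition of $Y_0$ projected onto $\Pi^1$, and the comparison transports the Lemma into an equivalent form without proving it. ``The identity then follows'' is therefore circular as written, and the ``alternatively, head-on'' sentence is not a proof. You stop exactly where the content of the Lemma begins, as you yourself anticipate in your closing paragraph.

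The gap is fillable by one more pull-through that you never perform: take the adjoint of \eqref{id1}, namely $a_kX=Y_ka_k-v_kY_kX$, and apply it to $Y_0|v\rangle$ using your formula for $a_kY_0|v\rangle$. This gives
\begin{equation*}
\langle\Omega|a_ka(v)XY_0|v\rangle=v_k\bigl(\langle\Omega|a(v)Y_kZ_k(1-Y_0a^\dagger(v))|\Omega\rangle-\langle v|Y_kXY_0|v\rangle\bigr)\,,
\end{equation*}
and subtracting this from $\chi(k)=v_k\langle\Omega|Z_k(1-Y_0a^\dagger(v))|\Omega\rangle$ reproduces $v_k\lambda_k$ exactly. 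With that step inserted your argument closes and is a genuinely different, arguably leaner, organization than the paper's, which instead establishes the operator identity behind the Lemma directly by repeated use of \eqref{id1}--\eqref{id4} before taking the vacuum expectation value; your version trades that algebra for the Schur-complement interpretation of $\chi$ as the one-particle component of the ground state. Without it, the proposal is incomplete at its only essential point.
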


As a consequence of the resolvent identity we have 
$$
\langle \Omega| a_k a(v)  Y_0  | v\rangle = (k^2+ 1 - E_0) \langle \Omega| a_k G a(v)  Y_0  | v\rangle = v_k - (k^2+ 1 - E_0) \langle \Omega| a_k Y_0  | v\rangle 
$$
and hence we obtain from Lemma ~\ref{lem:aux} that
$$
v_k    \lambda_k  = v_k  - (k^2 - E_0) \langle \Omega| a_k Y_0 | v \rangle + \langle \Omega | a_k D Y_0 | v \rangle\,.
$$
The identity
\begin{align*}
& \frac{ 1 }{|k|} \left( k^2 - E_0 - D \right) \\ &=  (1+ A) {|k|} - \left( \frac 1{1+E_0} - c_0\right) \frac 1{|k|} | v \rangle \langle v | + \frac {\sqrt{c_0}}{|k|}  | v \rangle \langle \varphi| |k| +\sqrt{c_0} |\varphi\rangle\langle v|
\end{align*}
thus implies that 
\begin{align}\nonumber
\sqrt{c_0} \varphi_{k} & = \frac{v_k}{|k|} \left( \lambda_0 - 1 + \frac {E_0} {1+E_0} - E_0 c_0 + \sqrt{c_0} \langle \varphi|   |P|  Y_0 | v\rangle \right) \\ & \quad + \langle \Omega| a_k (1+ A) |P| Y_0 | v \rangle  - E_0 \sqrt{c_0} \varphi_{k}  \,. \label{impl}
\end{align}
Now $\varphi \in L^2(\R^d)$ and so is $\Pi^1 |P|  Y_0 |  v\rangle$, since $(1+P^2 + \N)^{1/2} Y_0 (1+ P^2 + \N)^{1/2}$ is a bounded operator. But $k\mapsto v_k |k|^{-1}$ is not in $L^2(\R^d)$, hence the term in parentheses in the first line of \eqref{impl} has to vanish. This in particular implies the first identity \eqref{firstid}, and also the second in \eqref{secondid} since
$$
\lambda_0 - 1 + \frac {E_0} {1+E_0} - E_0 c_0 = - c_0 (1+E_0)
$$
using that $c_0 = 1/(1+E_0) -  \lambda_0$. 
\end{proof}

It remains to give the
\begin{proof}[Proof of  Lemma~\ref{lem:aux}]
Besides the identities 
\eqref{id1} and \eqref{id2} 
we are going to use that
\begin{equation}\label{id3}
Y_k Z_k = Y_k - Z_k - Y_k |v \rangle \langle \Omega| Z_k + | \Omega\rangle \langle \Omega| Z_k
\end{equation}
as well as 
\begin{equation}\label{id4}
X Y_0 = X - Y_0 + \Pi^1 Y_0 - X a^\dagger(v) \Pi^1 Y_0
\end{equation}
which can easily be obtained by evaluating the differences $X^{-1} - Y_0^{-1}$ and $Y_k^{-1} - Z_k^{-1}$, respectively. 
By using  these four identities multiple times, we have
\begin{align*}
&   (1 - a(v) Y_k) Z_{k}  (1  - Y_0 a^\dagger(v))  + a(v) Y_k X Y_0 a^\dagger(v) 
\\ & = Z_k - a(v) \left(Y_k - Z_k - Y_k |v \rangle \langle \Omega| Z_k \right)   (1  - Y_0 a^\dagger(v)) 
+ \frac 1{v_k} \left( a_k Y_0 - Z_k a_k \right) a^\dagger(v) \\ & \quad + a(v) Y_k \left( X - Y_0 + \Pi^1 Y_0 - X a^\dagger(v) \Pi^1 Y_0\right)  a^\dagger(v) 
\\ & = Z_k - a(v) \left(Y_k - Z_k - Y_k |v \rangle \langle \Omega| Z_k \right)   + \frac 1{v_k} \left( 1 + a(v) \left(1 + Y_k |v \rangle \langle \Omega|  \right)\right)  \left( a_k Y_0 - Z_k a_k\right) a^\dagger(v)  \\ & \quad - \frac 1{v_k} a(v) \left( a_k X - Y_k a_k \right) \left( 1  -  a^\dagger(v) \Pi^1 Y_0\right)  a^\dagger(v) + a(v) Y_k  \Pi^1 Y_0   a^\dagger(v) \,.
\end{align*}
Taking the vacuum expectation value  and using that $a_k a^\dagger(v) = v_k + a^\dagger(v) a_k$ thus yields
\begin{align*}
\lambda_k & =  \langle \Omega | (1 - a(v) Y_k) Z_{k}  (1  - Y_0 a^\dagger(v))  + a(v) Y_k X Y_0 a^\dagger(v)  | \Omega \rangle 
\\ & =    \frac{ 1}{v_k} \langle \Omega | a_k (1+a(v)) Y_0 | v\rangle   
+ \frac 1{v_k}  \langle v|  a_k X   a^\dagger(v) \Pi^1 Y_0 | v \rangle 
\end{align*}
as claimed.
\end{proof}

\begin{remark}\label{rem2}
One can check that all the smallness conditions assumed, namely $E_0 > -1$, $\nu_2 >0$, $c_0 > 0$ and  $\| A \| < 1$, can be expressed as a bound on $\vertiii{v}$, which quantifies the relative form bound of $\Phi(v)$ with respect to $(P+k)^2 + \N$, uniformly in $k\in\R^d$ (see the Appendix). This leads to the claimed lower bound on $g_0$ stated after Theorem~\ref{thm:main}, at least in the infrared singular case when $|\,\cdot\,|^{-1} v \not\in L^2(\R^d)$. To extend this statement to all $v$, we shall now give an alternative proof of Lemma~\ref{lem:main} that equally holds in the infrared regular case.

We start from \eqref{impl} and shall show that the parenthesis in the first line vanishes, even if $|\,\cdot\,|^{-1} v \in L^2(\R^d)$. By \eqref{defp} and Lemma~\ref{lem:aux}, we have
$$
\sqrt{c_0} \langle \varphi_0 |  |P| Y_0 |v\rangle = - \lambda_0 E_0 - \langle v | Y_0 ( 1 + a^\dagger(v) + D) \Pi^1 Y_0  | v\rangle \,.
$$
Thus the desired identity \eqref{secondid} follows if
\begin{equation}\label{c0id}
c_0 =  - \frac{E_0}{1+E_0}  - \langle v | Y_0 ( 1 + a^\dagger(v) + D) \Pi^1 Y_0  | v\rangle \,.
\end{equation}
In order to show \eqref{c0id}, we start from \eqref{def:ckl} and observe that $Z_0 ( 1 - Y_0 a^\dagger(v)) | \Omega\rangle = ( 1 - Y_0 a^\dagger(v)) | \Omega\rangle$ since this vector is actually equal to the ground state of $H$. Hence
$$
c_0 =  \frac 1{1+E_0}  - 1  - \langle v | Y_0 (1+X) Y_0 | v \rangle\,.
$$
That this indeed equals \eqref{c0id} is then an easy consequence of \eqref{id4}. 
\end{remark}

\appendix
\section{Technical bounds}
In this appendix we shall show the bound $\|A\|\leq O(g^2)$ claimed in the text. We start by showing that 
\begin{equation}\label{appeq}
\sup_{k\in \R^d} \| (1+|P+k|) Y_k (1+|P+k|) \| < \infty\,.
\end{equation}
By explicitly designating the dependence on $v$ and writing $E_0(v)$ for the ground state energy of $H$ with interaction $\Phi(v)$, we can bound
\begin{align*}
& \Pi^{\geq 1} ( (P+k)^2 + \Phi(v) + \N - E_0(v) ) \Pi^{\geq 1} \\ &\geq  \Pi^{\geq 1} \left( \delta + \delta (P+k)^2 -E_0(v)  + (1-\delta) E_0(v(1-\delta)^{-1}) \right)
\end{align*}
for any $0\leq \delta \leq 1$, which readily implies the desired bound, at least for small $g$.

From~\eqref{appeq} one immediately deduces that $\sup_{k\in\R^d}|E_k| \leq C \vertiii{v}$ for some constant $C>0$. Similarly, one can show that 
 $|E_k - E_0| \leq C \vertiii{v} |k|^2$. In fact, the resolvent identity  implies that $E_k$ is twice differentiable, and
$$
\partial_{k_i} \partial_{k_j} E_k =2  \delta_{ij}  \langle v| Y_k^2 | v \rangle - 4 \langle v| Y_k (P_i + k_i) Y_k (P_j+k_j) Y_k | v \rangle
$$
for $1\leq i,j\leq d$. Since
$$
\partial_{k_i}  E_k =2    \langle v| Y_k (P_i + k_i) Y_k | v \rangle 
$$
vanishes at $k=0$ (since $v$ is even), the desired bound follows. 

In a similar way, one  shows that $F$ is bounded and has bounded derivatives. Since $F(k,l)$ vanishes by construction if either $k=0$ or $l=0$, this implies the desired bound on the norm of $|k|^{-1} v F v |k|^{-1}$ (in fact, one obtains a bound on its Hilbert--Schmidt norm this way).

\end{document}